\documentclass[letterpaper, 10 pt, conference]{ieeeconf}  
\usepackage{colortbl}
\definecolor{mygray}{gray}{.9}
\usepackage{cite}
\usepackage{color}
\usepackage{amsfonts,amssymb}
\usepackage{bm}
\usepackage{amsmath}
\usepackage{multirow}
\usepackage{algorithmic}
\usepackage{graphicx}
\usepackage{wrapfig}
\usepackage{subfigure}
\usepackage{mathtools}
\usepackage{float}
\usepackage{textcomp}
\usepackage{graphicx} 
\usepackage{epstopdf}
\usepackage{array}
\usepackage[thmmarks,amsmath]{ntheorem}

\newtheorem{proposition}{Proposition}
\newtheorem{assumption}{Assumption}
\newtheorem{lemma}{Lemma}
\newtheorem{remark}{Remark}

\newtheorem{problem}{Problem}

\usepackage[title]{appendix}
\usepackage{cite}
\usepackage{array}
\usepackage{enumerate}
\usepackage{booktabs}
\usepackage{algorithm}
\usepackage{algorithmic}
\usepackage{setspace}
\usepackage{float}
\usepackage{cases}
\usepackage{mathrsfs}
\renewcommand{\QED}{\QEDopen}

\makeatletter\@mparswitchfalse\makeatother\normalmarginpar 
\usepackage[textwidth=1cm, textsize=scriptsize]{todonotes}

\IEEEoverridecommandlockouts                              
\title{\bf Hierarchical Multi-agent Reinforcement Learning for Warehouse Robot Coordination under Communication Loss}

\author{Weihao Sun$^1$, \textit{Student Member, IEEE}, 
Gehui Xu$^2$, \textit{Member, IEEE}, and \\
Andreas A. Malikopoulos$^{1,3}$, \textit{Senior Member, IEEE}
\thanks{This work is supported in part by NSF under Grants CNS-2401007, CMMI-2348381, IIS-2415478, and in part by MathWorks.}
\thanks{$^1$W. Sun and A. A. Malikopoulos are with the Systems Engineering Program, Cornell University, Ithaca, NY 14850 USA. {\tt\small email: \{ws493,amaliko\}@cornell.edu}}
\thanks{$^2$G. Xu is  with the Department of Electrical and Electronic Engineering, Imperial College London, London SW7 2AZ, UK. {\tt\small email: g.xu@imperial.ac.uk}}
\thanks{$^3$A. A. Malikopoulos is with the Applied Mathematics, Systems Engineering, Mechanical Engineering, Electrical \& Computer Engineering, and School of Civil \& Environmental Engineering, Cornell University, Ithaca, NY, USA. (email: \texttt{amaliko@cornell.edu}) }
}

\begin{document}

\maketitle
\thispagestyle{empty}
\pagestyle{empty}

\begin{abstract}
In this paper, we propose a hierarchical multi-agent reinforcement learning framework for coordinating robot teams in warehouse environments under communication loss. We partition the robot team into groups, with centralized coordination within each group and distributed coordination across groups. Each group uses a recurrent predictor to estimate unavailable interaction information due to communication loss. A higher-level policy then generates a compact coordination reference that conditions the local control policy within each group. A predictive safety filter evaluates and modifies the proposed controls when they violate safety constraints. Simulation results show improved task completion under communication loss, reduced communication growth as the team size increases, and safe operation in the tested scenarios.
\end{abstract}

\section{Introduction}
 
Multi-agent robotic systems are increasingly deployed across a wide range of applications, including warehouse and logistics operations\cite{krnjaic2024scalable}, autonomous transportation\cite{dinneweth2022multi,Sun2026CDC,Malikopoulos2020}, and search and rescue\cite{drew2021multi}. For instance, in modern warehouse and logistics systems, mobile robot teams are widely used to transport parcels efficiently across shared workspaces\cite{agaskar2025deepfleet}. Effective collaboration among the robot team can improve task execution quality and reduce reliance on manual handling. In these scenarios, coordination must support efficient and safe task execution despite strong interactions among robots. However, as the team size grows, these interactions require more information exchange and computational effort, making effective coordination more challenging.

In this paper, we consider a large robot team coordinating in a warehouse environment. The robots must jointly transport goods while avoiding collisions and satisfying their motion constraints. 
A centralized approach is a classical solution method that exploits complete system information.
Existing work includes search-based\cite{jia2025conflict} and rolling horizon planning methods\cite{li2021lifelong}, with other works accounting for execution delays\cite{ma2017multi}.
Although such methods have demonstrated strong coordination performance for large warehouse systems, they require fleet-level information aggregation and joint decision computation, which become increasingly demanding as the team size grows \cite{Malikopoulos2021,Malikopoulos2026a}.
Decentralized and learning-based approaches instead allow each robot to make individual decisions. Graph neural network methods, for example, use the communication topology for coordination \cite{li2020graph,li2021message,le2025combining}, while distributed multi-agent reinforcement learning (MARL) methods exploit interaction graphs and local value functions \cite{jing2024distributed}. Related MARL approaches address cooperative decision-making under partial observations \cite{lowe2017multi,zhang2023cooperative,amato2013decentralized}, while methods for predicting and imputing missing observations have been explored in \cite{santos2025centralized}.
However, decentralized coordination becomes more difficult as the team grows and can degrade when shared information is temporarily unavailable. Hierarchical approaches provide an intermediate structure between centralized and decentralized coordination. Liu et al. partition a warehouse into spatial sectors and combine centralized sector-level planning with decentralized local coordination \cite{liu2020prediction}, while hierarchical MARL has been used for warehouse worker agent management \cite{krnjaic2024scalable}. More generally, hierarchical distributed control can separate higher-level multi-agent coordination through reference generation from local control \cite{wang2024hierarchical}. 
However, existing approaches do not directly address how unavailable information can be incorporated into a hierarchical coordination structure under limited communication while maintaining safe robot interactions in warehouse systems.

To address these challenges, we propose a hierarchical MARL framework for warehouse robot coordination under communication loss. The robot team is partitioned into fixed groups, 
with centralized joint coordination within each group, while coordination across groups is distributed through inter-group communication. 
When communication becomes unavailable, a recurrent predictor implemented using a gated recurrent unit (GRU) estimates the missing interaction information from previously received messages. Rather than directly providing individual robot control policies, 
a group-level reference policy maps it to a finite coordination reference, which then conditions the joint control policy within each group.
The hierarchical structure restricts the inter-group information influencing the local control and reduces direct dependence on inter-group shared information.
Before execution, a predictive safety filter uses one-step reachable states to evaluate the proposed joint control and modifies it when necessary. 
A centralized critic uses the common team return during training, while execution requires no global coordinator. We further provide supporting analysis of reference information flow, communication scaling, and the safety filter, and evaluate the proposed framework in warehouse simulation environments under different team sizes, goal assignments, and communication condition.

The remainder of this paper is organized as follows. In Section II, we formulate the warehouse robot coordination problem. In Section III, we present the proposed hierarchical MARL framework, including interaction state prediction, hierarchical control, safety filtering, and centralized training with distributed group execution. Section IV presents the simulation setup and results. In Section V, we highlight the conclusion and future work.

\section{Problem Formulation}

Let
$\mathcal{N}=\{1,\ldots,N\}$ denote a team of $N$ robots operating in the
warehouse.
The robot team is partitioned into $Z$ fixed and non-overlapping control
groups indexed by $\mathcal{Z}=\{1,\ldots,Z\}$.
Let $\mathcal{N}_z\subseteq\mathcal{N}$ denote the set of robots assigned
to group $z\in\mathcal{Z}$, such that
$\mathcal{N}= \bigcup_{z\in\mathcal{Z}}\mathcal{N}_z, 
\mathcal{N}_z\cap\mathcal{N}_{z'} = \emptyset,z\neq z'.$
We consider a fixed group partition over the task horizon, as introduced in the following assumption.
\begin{assumption}
The assignment of robots to control groups is fixed over the task horizon and does not depend on the physical locations of the robots.
\end{assumption}

We consider a grid-based warehouse environment with a feasible position set
$\mathcal{P}\subseteq\mathbb{R}^2$. 
For each robot $i\in\mathcal{N}_z$, we define its state as
\begin{equation}\label{eq:agent_state}
    x_t^{i,z}
    =
    p_t^{i,z},
\end{equation}
where $p_t^{i,z}\in\mathcal{P}$ denotes the robot position and
$t=0,\cdots,T-1$ denotes the time step.
Each robot $i\in\mathcal{N}_z$ is assigned a goal $g^{i,z}\in\mathcal{P}.$

We denote each robot's control input by $u_t^{i,z}\in\mathcal U$. 
For the warehouse setting considered in this paper, we use the discrete control space, \textit{i.e.,} $\mathcal U =
\{\mathrm{Up},\mathrm{Down},\mathrm{Left},\mathrm{Right},\mathrm{Wait}\}.$
Each control command specifies the motion of the robot during one time step, while $\mathrm{Wait}$ keeps the robot at its current position.

For each group $z$, we define the group-level state and control as
$X_t^z=(
x_t^{i,z}:i\in\mathcal{N}_z),
U_t^z
=(u_t^{i,z}:i\in\mathcal{N}_z).$
The corresponding global state and joint control are then denoted by
$X_t=(
X_t^z:z\in\mathcal{Z}),
U_t=(U_t^z:z\in\mathcal{Z}).$
Then, we define the robot team dynamics according to
\begin{equation}\label{eq:system_dynamics}
    X_{t+1}
    =
    F_t
    \left(
    X_t,
    U_t
    \right),
\end{equation}
where $F_t$ denotes the system transition function.

We next specify the state information available to each group coordinator.
\begin{assumption}\label{asp:group_state}
At each time $t$, the coordinator of group $z$ has access to the current state $X_t^z$ of all robots assigned to that group.
\end{assumption}
Under Assumption~\ref{asp:group_state}, coordination is centralized with all information for each robot available within each group. Coordination among different groups relies on communicated information, which may be unavailable due to communication loss. Thus, we introduce the following assumption.

\begin{assumption}\label{asp:communication}
Communication loss affects only information exchanged between different groups.
\end{assumption}
The inter-group communication structure is described by a directed graph $\mathcal{G} = \left( \mathcal{Z}, \mathcal{E}_c \right),$ 
where $(z',z)\in\mathcal{E}_c, z,z'\in \mathcal Z$ indicates that group $z'$ may transmit
coordination information to group $z$. Let
$\mathcal C^z= \{ z'\in\mathcal Z:(z',z)\in\mathcal E_c\}$
denote communication neighbors of group $z$, i.e., the groups that may transmit information to group $z$.

At time $t$, the message transmitted from group $z'$ to group $z$ is denoted by
\begin{equation}\label{eq:message}
    M_t^{z',z}
    =
    \left(
    X_t^{z'},
    U_{t-1}^{z'}
    \right).
\end{equation}
Since the inter-group communication may be unavailable, we then
introduce the availability indicator $A_t^{z',z}\in\{0,1\}$, where $A_t^{z',z}=1$
if the current message is successfully received and $A_t^{z',z}=0$ otherwise.
Then, the information successfully received by group $z$ at time $t$ is collected as
\begin{equation}\label{eq:shared_information}
    S_t^z
    =
    \left\{
    M_t^{z',z}:
    z'\in\mathcal C^z,
    A_t^{z',z}=1
    \right\}.
\end{equation}

The robot team receives a shared reward at each time step that captures progress toward assigned goals, task completion, execution efficiency, and collision avoidance. Specifically, the team reward is defined as
\begin{align}\label{eq:reward}
R_t
=\!
\frac{1}{N}
\!\sum_{z\in\mathcal Z}
\!\sum_{i\in\mathcal N}
\big[
q_p
\big(
\!D(p_t^{i,z},g^{i,z})
\!-
\!D(p_{t+1}^{i,z},g^{i,z})
\big)
\notag\\
\!+\!
q_g \mathbf{1}_t^{i,z,\mathrm{g}}
-\!
q_c \mathbf{1}_t^{i,z,\mathrm{c}}
-
q_{\tau}\mathbf{1}_t^{i,z,\mathrm u}
\big],
\end{align}
where $D(p_t^{i,z},g^{i,z})$ denotes the distance between robot $i$ in group $z$ and its assigned goal, $\mathbf{1}_t^{i,z,\mathrm{g}}=1$ when robot $i$ reaches its goal for the first time at time $t$, $\mathbf{1}_t^{i,z,\mathrm{c}}=1$ if robot $i$ is involved in a collision at time $t$, and $\mathbf{1}_t^{i,z,\mathrm{u}}=1$ if robot $i$ has not yet completed its assigned task at time $t$. $q_p$, $q_g$, $q_c$, and $q_\tau$ denote the positive weight of each term.

For each group $z\in\mathcal Z$, let $\mu_{\psi_z}$ and
$\pi_{\theta_z}$ denote the reference and local control policies,
respectively, where the reference policy generates an abstract
coordination variable that conditions the local control policy, and $\psi_z$ and $\theta_z$ are their corresponding
parameters. Let $\psi=(\psi_z:z\in\mathcal Z)$ and
$\theta=(\theta_z:z\in\mathcal Z)$ be the joint policy parameters
across all groups.
Then, the overall objective is to learn the hierarchical reference and local control policies to maximize the expected cumulative team reward
\begin{equation}\label{eq:objective}
    J(\psi,\theta)
    =
    \mathbb{E}_{\Pi_{\psi,\theta}}
    \left[
    \sum_{t=0}^{T-1}
    R_t
    \right],
\end{equation}
where $\Pi_{\psi,\theta}$ denotes the joint hierarchical policy induced by the reference and local control policies. The coordination problem considered in this paper is stated as follows.

\begin{problem}
Determine the reference and local control policies for each group that maximize the expected finite-horizon team reward in \eqref{eq:objective} under the system dynamics \eqref{eq:system_dynamics} and inter-group communication loss.
\end{problem}

\section{Hierarchical Coordination Algorithm}

In this section, we introduce the proposed hierarchical MARL warehouse coordination framework. Fig.~\ref{fig:algorithm_flow} illustrates the algorithm flow, combining recurrent interaction state prediction, hierarchical control for both group-level and local robots, a safety filter, and distributed group execution with centralized training. Each component will be introduced in detail in the following subsections. 

\begin{figure*}[]
    \centering
    \includegraphics[width=0.9\textwidth,height=0.3\textheight]{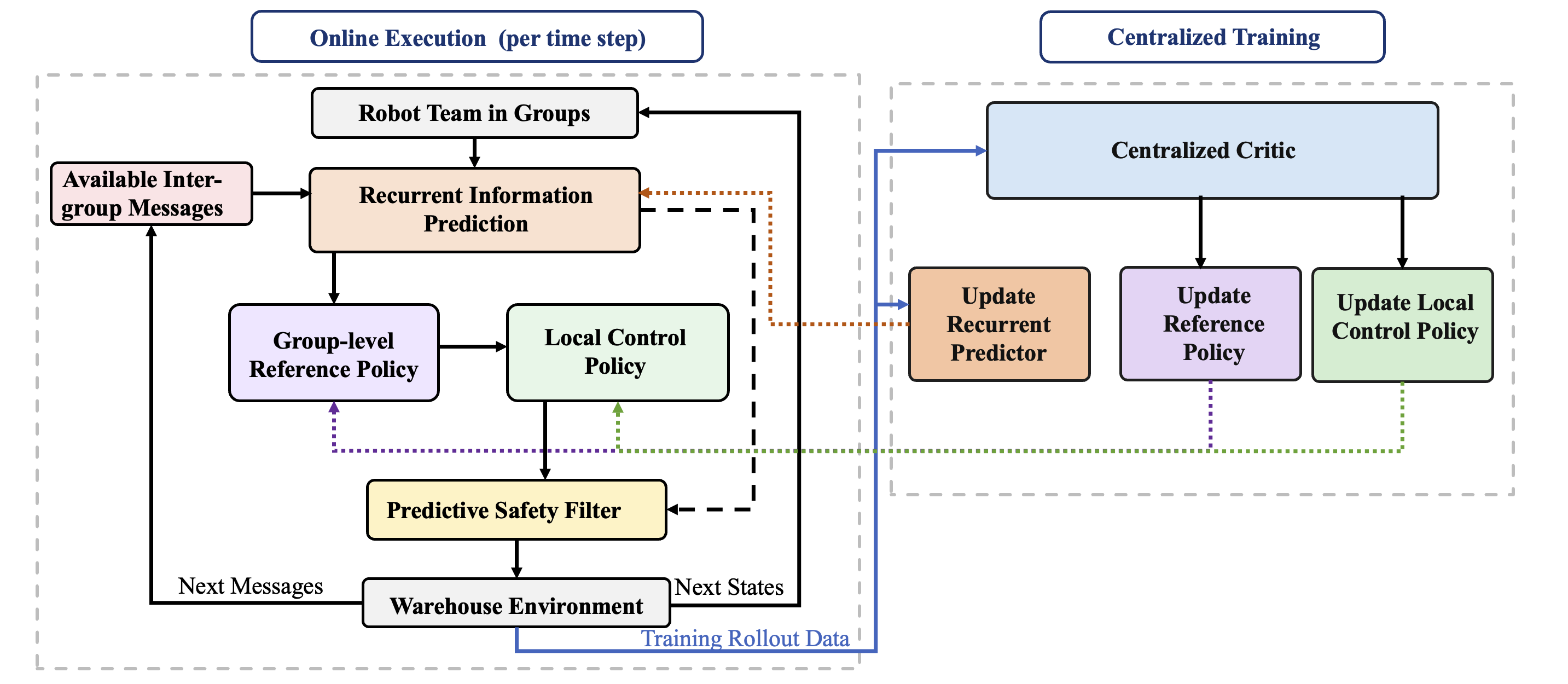}
    \vspace{-0.4cm}
    \caption{Overview of the proposed hierarchical coordination framework.}
    \label{fig:algorithm_flow}
\end{figure*}

\subsection{Inter-group Information Prediction}

The recurrent interaction state predictor estimates interaction information unavailable due to inter-group communication loss. We implement the predictor using a GRU so that previously received communication information can be retained and used to estimate the missing current information.

We define the predictor
input as
\begin{equation}\label{eq:predictor_input}
    \overline M_t^{z',z}
    =
    \begin{cases}
        M_t^{z',z},
        & A_t^{z',z}=1,\\
        \varnothing,
        & A_t^{z',z}=0.
    \end{cases}
\end{equation}
where $\varnothing$ denotes a null input with the same fixed dimension as the encoded message.

The recurrent predictor updates its hidden state and generates predicted information according to
\begin{equation}\label{eq:recurrent_predictor}
    \left(
    h_t^{z',z},
    \widetilde M_t^{z',z}
    \right)
    =
    f_{\eta_z}
    \left(
    h_{t-1}^{z',z},
    \overline M_t^{z',z},
    X_t^z,
    A_t^{z',z}
    \right),
\end{equation}
where  $\widetilde M_t^{z',z}$ denotes the predicted information, $f_{\eta_z}$ is a recurrent predictor parameterized by
$\eta_z$, and $h_t^{z',z}$ denotes the GRU hidden state that encodes previously available inter-group information associated with group $z'$.
If communication from group $z'$ is available, the received message is used directly. Otherwise, the prediction is used. Specifically,
\begin{equation}\label{eq:interaction_state_estimate}
    \widehat{M}_t^{z',z}
    =
    \begin{cases}
        M_t^{z',z},
        & A_t^{z',z}=1,\\[1mm]
        \widetilde M_t^{z',z},
        & A_t^{z',z}=0,
    \end{cases}
\end{equation}
Collecting the completed messages from all communication neighbors,
the inter-group information used by group $z$ is
\begin{equation}\label{eq:interaction_state_collection}
    \widehat S_t^z
    =
    \left(
    \widehat M_t^{z',z}:
    z'\in\mathcal C^z
    \right).
\end{equation}

During centralized training, the training process retains the ground-truth message $M_t^{z',z}$ as a supervised target, regardless of the value of $A_t^{z',z}$. We then define the loss function of the recurrent predictor for each group $z\in\mathcal Z$ as:
\begin{equation}\label{eq:gru_pred_loss}
    \mathcal L_{\mathrm{c}}(\eta_z) = 
    \mathbb E \left[
    \sum_{z'\in\mathcal C^z} 
    \left( 1-A_t^{z',z} \right) 
    \left\| \widetilde M_t^{z',z} - M_t^{z',z} \right\|_2^2
    \right],
\end{equation}
where $1-A_t^{z',z}$ ensures prediction error is included only when the current message from group $z'$ is unavailable to group $z$. The predictor parameters $\eta_z$ are updated during centralized training using this loss.

Beyond mitigating missing information, the group-based communication structure also reduces the repeated information transmission across the network. To illustrate, consider a fully decentralized reference case where all robots share their information with every other robot. Let $d_m$ denote the dimension of the complete communicated information associated with one robot. Such a structure requires $N(N-1)d_m$ communicated scalar elements at each time step. However, in the proposed architecture, each robot's information is incorporated once within its group and is subsequently shared at most once with each of the other $Z-1$ groups. Therefore, even when the group-level messages collectively contain information from all robots, the communication load satisfies
\begin{equation}
    C_{\mathrm{hier}} \leq N(Z-1)d_m,
    \qquad
    \frac{C_{\mathrm{hier}}}{C_{\mathrm{dec}}}
    \leq \frac{Z-1}{N-1},
\end{equation}
where $C_{\mathrm{dec}}=N(N-1)d_m$ denotes the fully decentralized all to all case. Hence, when $Z\ll N$, the hierarchical structure reduces repeated robot-level information sharing from $\mathcal O(N^2)$ to $\mathcal O(NZ)$. In particular, for a fixed number of groups, the repeated communication load grows linearly with $N$, compared with quadratic growth for robot-level all-to-all communication.

\subsection{Inter-group Reference Generator}

Using the completed inter-group information $\widehat S_t^z$ constructed in the previous subsection, we use the reference layer to generate a reduced coordination reference for each group. For group $z\in\mathcal Z$, the reference policy is conditioned on the current group state $X_t^z$, the assigned goal information, and $\widehat S_t^z$. For robots in each group $z$, we collect the assigned goals in
$G^z =\left(g^{i,z}: i\in\mathcal N_z\right),G^z\in\mathcal P^{|\mathcal N_z|}$. We include this information in the policy observations to ensure the reference and control policies account for the assigned goals.

For each group $z\in\mathcal Z$, we define the reference observation
\begin{equation}
    o_{t}^z = \operatorname{Enc}_r
    \left( X_t^z, G^z, \widehat S_t^z
    \right),
\end{equation}
where $\operatorname{Enc}_r$ maps the available information to a fixed-dimensional representation used by the reference policy.
The reference policy then generates
\begin{equation}\label{eq:reference_policy}
    r_t^z
    \sim
    \mu_{\psi_z}
    \left(
    \cdot
    \mid o_{t}^z
    \right),\qquad
    r_t^z\in\mathcal R_z,
\end{equation}
where $\psi_z$ are the policy parameters, and $\mathcal R_z$ is a finite discrete reference space whose elements are abstract coordination variables without predefined physical meaning.

During execution, a deterministic reference can be selected according to
\begin{equation}
    r_t^z \in \arg\max_{r\in\mathcal R_z} \mu_{\psi_z}
    \left( r \mid o_{t}^z \right).
\end{equation}
Notice that the reference policy does not directly specify the control of individual robots. Instead, it provides a reduced coordination variable that conditions the subsequent local control policy, thereby separating group coordination from detailed joint robot control.

\subsection{Local Control}

The local control layer then converts the group reference $r_t^z$ into a proposed joint control for all robots assigned to group $z\in\mathcal Z$. Unlike the reference policy, which incorporates intergroup information to determine a coordination command, the local control policy operates only on the complete group state $X^z_t$, goal information $G^z$, and the generated reference $r_t^z$. The local policy does not directly receive $\widehat S_t^z$. Thus, inter-group communication only influences the proposed local control through the group-level reference.

For each group $z$, we define the local control observation: 
\begin{equation}
    w_{t}^z = 
    \operatorname{Enc}_u 
    \left( X_t^z, G^z, r_t^z \right),
\end{equation}
where $\operatorname{Enc}_u$ denotes another encoder that transforms the information into the input representation used by the local actor. 
Then, the local control policy generates a proposed
joint control action for each group:
\begin{equation}\label{eq:local_policy}
    \widetilde U_t^z
    \sim
    \pi_{\theta_z}
    \left(
    \cdot
    \mid
    w_t^z
    \right),
    \qquad
    \widetilde U_t^z\in \mathcal U^{|\mathcal N_z|},
\end{equation}
where $\theta_z$ denotes the parameters of the local control policy.

During training, $\widetilde U_t^z$ is sampled from $\pi_{\theta_z}$ according to \eqref{eq:local_policy}. During execution, the corresponding deterministic control can be selected according to
\begin{equation}\label{eq:local_deterministic_control}
    \widetilde U_t^z \in 
    \arg\max_{U\in\mathcal U^{|\mathcal N_z|}}
    \pi_{\theta_z} \left( U \mid w_{t}^z \right).
\end{equation}

Under this hierarchical structure, the completed inter-group information $\widehat S^z_t$ does not directly enter the local control policy, but only influences it through reference $r_t^z$. We next characterize this information flow using conditional mutual information. We define the conditional Shannon entropy and conditional mutual information \cite{cover1991elements} as
\begin{align*}
&H\!(r_t^z\!\mid \!X_t^z,\!G^z)
=
-\mathbb E\!\big[
\sum_{r\in\mathcal R_z}
p(r\mid\! X_t^z,G^z)\log p(r \! \mid \!X_t^z,G^z)
\big], \notag\\
&I\!(\widehat S_t^z;\!r_t^z\!\mid \!X_t^z,\!G^z)
=
H\!(r_t^z\mid \!X_t^z,G^z)
-
H\!(r_t^z\mid \!\widehat S_t^z,X_t^z,G^z),
\end{align*}
where $H\!(\cdot),I(\cdot) \in\mathbb R_{\geq 0}$, and $p(\cdot|\cdot)\in[0,1]$ denotes the corresponding conditional
probability mass function. We then introduce the following proposition to bound the information amount.

\begin{proposition}
\label{prop:reference}
Under the reference policy in \eqref{eq:reference_policy} and the
local control policy in \eqref{eq:local_policy}, conditioned on
$X_t^z$ and $G^z$, the completed inter-group information
$\widehat S_t^z$ and the proposed local control $\widetilde U_t^z$
satisfy
\begin{equation*}
I\!\big(
\widehat S_t^z;
\widetilde U_t^z
\mid X_t^z,G^z
\big)
\leq
H\!\big(
r_t^z
\mid X_t^z,G^z
\big)
\leq
\log |\mathcal R_z|.
\end{equation*}

\end{proposition}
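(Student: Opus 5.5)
The plan is to make the Markov chain structure explicit and then combine the conditional data processing inequality with the standard entropy bounds. By construction, the local control observation is $w_t^z=\operatorname{Enc}_u(X_t^z,G^z,r_t^z)$, and $\widetilde U_t^z$ is drawn from $\pi_{\theta_z}(\cdot\mid w_t^z)$, using randomness independent of everything else. Consequently, once we condition on $(X_t^z,G^z)$, the conditional law of $\widetilde U_t^z$ given $(\widehat S_t^z, r_t^z)$ depends only on $r_t^z$. This says that
\begin{equation*}
\widehat S_t^z \;\to\; r_t^z \;\to\; \widetilde U_t^z
\end{equation*}
is a Markov chain conditionally on $(X_t^z,G^z)$. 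The reason is that the local policy never receives $\widehat S_t^z$ directly, exactly as stated in the description of the local control layer. The first step is to write this out formally: $p(U\mid \widehat S,r,X,G)=\pi_{\theta_z}(U\mid \operatorname{Enc}_u(X,G,r))=p(U\mid r,X,G)$. The same argument covers the deterministic argmax selection in \eqref{eq:local_deterministic_control}, where the conditional law is a point mass.

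Second, I apply the chain rule for conditional mutual information in two orders:
\begin{equation*}
I(\widehat S_t^z; r_t^z,\widetilde U_t^z\mid X_t^z,G^z)
= I(\widehat S_t^z;\widetilde U_t^z\mid X_t^z,G^z)+I(\widehat S_t^z; r_t^z\mid \widetilde U_t^z,X_t^z,G^z)
\end{equation*}
and
\begin{equation*}
I(\widehat S_t^z; r_t^z,\widetilde U_t^z\mid X_t^z,G^z)
= I(\widehat S_t^z; r_t^z\mid X_t^z,G^z)+I(\widehat S_t^z;\widetilde U_t^z\mid r_t^z,X_t^z,G^z).
\end{equation*}
The last term vanishes by the conditional Markov property. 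Since mutual information is nonnegative, this gives
\begin{equation*}
I(\widehat S_t^z;\widetilde U_t^z\mid X_t^z,G^z)\le I(\widehat S_t^z; r_t^z\mid X_t^z,G^z).
\end{equation*}

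Third, the right-hand side equals $H(r_t^z\mid X_t^z,G^z)-H(r_t^z\mid \widehat S_t^z,X_t^z,G^z)$ by the definition given before the proposition. Because $r_t^z$ is discrete on the finite set $\mathcal R_z$, the subtracted conditional entropy is nonnegative, so the expression is at most $H(r_t^z\mid X_t^z,G^z)$. Finally, for each conditioning value, the entropy of a pmf supported on $\mathcal R_z$ is at most $\log|\mathcal R_z|$, with equality for the uniform distribution (Jensen's inequality / Gibbs' inequality). Taking expectations preserves this bound, which yields the second inequality.

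The main obstacle is the first step: justifying the Markov chain rigorously. Two points need care. First, $\widehat S_t^z$ may be continuous-valued, since it contains GRU predictions, so the mutual information with $\widehat S_t^z$ should be understood in the general sense and not through the pmf formula alone. Second, the sampling noise of $\pi_{\theta_z}$ must be independent of $\widehat S_t^z$ given $w_t^z$. I would state this independence as part of the policy model and rely on the fact that chain rules and data processing hold for general random variables. The entropy bounds then only involve the discrete variable $r_t^z$, so no further difficulty arises there.
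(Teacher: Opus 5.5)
Your proposal is correct and follows the same route as the paper's proof. Both use the conditional data processing inequality along $\widehat S_t^z \to r_t^z \to \widetilde U_t^z$ given $(X_t^z,G^z)$, then drop the nonnegative term $H(r_t^z\mid \widehat S_t^z,X_t^z,G^z)$, then apply the uniform-distribution bound on the finite set $\mathcal R_z$; the paper simply asserts the Markov property and invokes the inequality, whereas you derive the Markov property from the form of $w_t^z$ and prove the inequality via the chain rule.
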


\begin{proof}
From the hierarchical structure, conditioned on $X_t^z$ and $G^z$,
the information $\widehat S_t^z$ influences $\widetilde U_t^z$ only
through $r_t^z$. Hence, by the conditional data processing inequality,
$I(\widehat S_t^z;\widetilde U_t^z\mid X_t^z,G^z)
\leq I(\widehat S_t^z;r_t^z\mid X_t^z,G^z)$. By the definition above,
$I(\widehat S_t^z;r_t^z\mid X_t^z,G^z)
=H(r_t^z\mid X_t^z,G^z)
-H(r_t^z\mid \widehat S_t^z,X_t^z,G^z)
\leq H(r_t^z\mid X_t^z,G^z)$, since conditional entropy is nonnegative. Finally, since $r_t^z\in\mathcal R_z$ and $\mathcal R_z$ is finite,
its conditional entropy is maximized by the uniform distribution over
$\mathcal R_z$, yielding $H(r_t^z\mid X_t^z,G^z)\leq\log|\mathcal R_z|$.
\end{proof}

Proposition~\ref{prop:reference} shows that the finite reference restricts the inter-group information influencing the proposed local control to at most $\log|\mathcal R_z|$. Rather than conditioning the local policy directly on the complete inter-group information $\widehat S_t^z$, the hierarchical structure channels this information through the finite reference, thereby bounding it in terms of the reference space cardinality. This mechanism introduces a trade-off: a smaller reference space provides stronger compression but may discard useful information, whereas a larger reference space may preserve more information at the cost of weaker compression. Thus, the choice and cardinality of the reference space constitute important design considerations and remain topics for future investigation.
\begin{remark}
Proposition~\ref{prop:reference} applies to the learned hierarchical policy before safety filtering. The executed control $U_t^z$ may additionally depend on $\widehat S_t^z$ through the safety filter.
\end{remark}

Since the proposed control $\widetilde U_t^z$ is generated by the
learned policy without explicit enforcement of the physical and
operational constraints of the warehouse, it is passed through a
safety filter before execution, which is introduced in the following subsection.

\subsection{Safety Filter}
The local policy does not execute the proposed joint control directly. Instead, each group applies a predictive safety filter that evaluates the proposed control using a one-step prediction of the robot motion.
At each time step, the filter uses the current group state
$X_t^z$, the proposed joint control $\widetilde U_t^z$, and the
completed inter-group information $\widehat S_t^z$ to evaluate potential
conflicts at the next time step. The filter first constructs the possible
one-step states of robots in neighboring groups. It then examines candidate joint controls for group $z$ in non-decreasing order of their deviation from $\widetilde U_t^z$, predicts the resulting next state of group $z$, and selects the first candidate satisfying the safety conditions. If no
examined candidate is admissible, a predefined fallback control is applied.

The filter first uses $\widehat S_t^z$ to construct one-step predictions of robots in neighboring groups. For each neighboring group $z'\in\mathcal C^z$ and robot $j\in\mathcal N_{z'}$, let $\widehat p_t^{j,z'}$ denote its received or predicted current position contained in $\widehat X_t^{z'}$. Since the current control
$U_t^{z'}$ is not assumed to be available to other groups, the corresponding one-step reachable state set is defined as
\begin{equation}\label{eq:interaction_prediction}
\Omega_{t+1}^{j,z'}
=
\left\{
F_t^{j,z'}\!
\left(
\widehat p_t^{j,z'},u'\!
\right)
:
u'\!\in\mathcal \!U
\right\},
j\in\mathcal N_{z'},
z'\in\mathcal C^z,
\end{equation}
where $\Omega_{t+1}^{j,z'}\subseteq\mathcal P$ contains the possible
next positions of robot $j$ in group $z'$, and $F_t^{j,z'}$ denotes
the corresponding robot component of the system transition in
\eqref{eq:system_dynamics}.

After constructing the reachable sets of the neighboring robots,
the filter generates candidate joint controls for any group $z\in Z$. For a candidate joint control $U=(u^{i,z}:i\in\mathcal N_z)$ and the proposed joint control $\widetilde U_t^z=(\widetilde u_t^{i,z}:i\in\mathcal N_z)$, we define the control deviation
\begin{equation}\label{eq:safety_deviation}
D^z(U, \widetilde U_t^z ) 
= \sum\nolimits_{i\in\mathcal N_z} 
\mathbf 1 
\left\{ u^{i,z} \neq \widetilde u_t^{i,z} \right\},
\end{equation}
where $\mathbf 1\{\cdot\}$ denotes the indicator function, $D^z(U,\widetilde U_t^z)\in\{0,\cdots,|\mathcal N_z|\}$. This deviation therefore measures the number of individual robot controls modified relative to the policy output. The candidate search is controlled by two parameters.
The search radius $L\in\{0,\ldots,|\mathcal N_z|\}$ specifies the
maximum allowable deviation, i.e., $D^z(U,\widetilde U_t^z)\leq L$. The candidate budget $C_{\max}\in\mathbb N$ specifies the maximum number of candidate controls examined at each time step.  
Candidate controls are evaluated in non-decreasing order of $D^z(U,\widetilde U_t^z)$ until an admissible candidate is found, or the budget is exhausted.

For each candidate joint control $U$, the filter predicts the next state
of group $z\in Z$ according to
\begin{equation}\label{eq:safety_prediction}
\widehat X_{t+1}^z(U)
=
F_t^z
\left(
X_t^z,
U
\right),
\end{equation}
where $F_t^z$ denotes the system transition in \eqref{eq:system_dynamics} restricted to the robots in group $z$. Thus, $\widehat X_{t+1}^z(U)$ is the predicted state resulting from applying candidate $U$ to the current group state $X_t^z$. 

Each candidate control $U$ is then evaluated by comparing the predicted next state of group $z$ against the one-step safety conditions defined below. Let $\widehat p_{t+1}^{i,z}(U)$ denote the position component of
$\widehat X_{t+1}^z(U)$ corresponding to robot $i\in\mathcal N_z$. A candidate control $U$ is admissible if and only if
\begin{align}\label{eq:safety_conditions}
\widehat p_{t+1}^{i,z}(U)
&\in\mathcal P,
&& \forall i\in\mathcal N_z,
\\
\widehat p_{t+1}^{i,z}(U)
&\neq
\widehat p_{t+1}^{j,z}(U),
&& \forall i,j\in\mathcal N_z,\ i\neq j,
\notag\\
\big(
\widehat p_{t+1}^{i,z}(U),
\widehat p_{t+1}^{j,z}(U)
\big)
&\neq
\big(
p_t^{j,z},
p_t^{i,z}
\big),
&& \forall i,j\in\mathcal N_z,\ i\neq j,
\notag\\
\left\|
\widehat p_{t+1}^{i,z}(U)-p'
\right\|_2
&>0,
&&
\substack{
\forall z'\in\mathcal C^z,\;
\forall i\in\mathcal N_z,\\
\forall j\in\mathcal N_{z'},\;
\forall p'\in\Omega_{t+1}^{j,z'}.
}\notag
\end{align}
The first condition requires the predicted position of each robot in
group $z$ to remain in the feasible warehouse region. The second and
third conditions prevent collisions within one group $z$ by excluding
simultaneous occupancy of the same position and pairwise position
exchanges. The final condition prevents a robot in group $z$ from occupying the same position as any possible next position of a robot in a neighboring group. The first candidate satisfying all conditions is selected as the executed control. The search terminates immediately after such a candidate is found. If no examined candidate is admissible within the search radius and candidate budget, the filter applies the joint \textsc{Wait} control as the fallback action.

The following result characterizes a sufficient candidate budget for examining all controls within radius $L$. Let $n_z=|\mathcal N_z|$ denote the number of robots in group $z$, and $m=|\mathcal U|$ denote the cardinality of the individual robot control set.

\begin{lemma}\label{lemma:safety_bound}
    Suppose an admissible candidate $U$ exists such that $D^z(U,\widetilde U_t^z)\leq L$. If
    \begin{equation}
        C_{\max} \geq \sum_{d=0}^L \binom{n_z}{d}(m-1)^d,
    \end{equation}
    then the filter examines sufficiently many candidates to find an
    admissible control within the search radius.
\end{lemma}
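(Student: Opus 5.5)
The argument is a counting argument over the candidate set, combined with the ordering rule of the filter. For each $d\in\{0,\ldots,n_z\}$ let
\[
\mathcal B_d=\{U\in\mathcal U^{n_z}: D^z(U,\widetilde U_t^z)=d\},
\]
the candidates differing from the proposed control in exactly $d$ robot entries. The first step is to compute $|\mathcal B_d|$ exactly. Such a $U$ is determined by two choices:
\begin{itemize}
\item the set of $d$ robots whose controls are modified, giving $\binom{n_z}{d}$ options;
\item for each modified robot $i$, a replacement control $u^{i,z}\in\mathcal U\setminus\{\widetilde u_t^{i,z}\}$, giving $m-1$ options each.
\end{itemize}
Unmodified robots keep $\widetilde u_t^{i,z}$, so the encoding is a bijection and $|\mathcal B_d|=\binom{n_z}{d}(m-1)^d$. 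By the definition of $D^z$ in \eqref{eq:safety_deviation}, the sets $\mathcal B_d$ are pairwise disjoint. Hence the ball of radius $L$,
\[
\mathcal B_{\le L}=\bigcup_{d=0}^L\mathcal B_d,
\]
has cardinality exactly $\sum_{d=0}^L\binom{n_z}{d}(m-1)^d$, which is the right-hand side of the budget condition.

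The second step uses the search rule. The filter lists candidates in non-decreasing order of $D^z(U,\widetilde U_t^z)$ and only examines those with $D^z\le L$. So any enumeration it produces is, without loss of generality, a listing of $\mathcal B_{\le L}$ in which all of $\mathcal B_0$ comes first, then $\mathcal B_1$, and so on, with ties broken arbitrarily. Each element appears at most once.

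If $C_{\max}\ge|\mathcal B_{\le L}|$, the budget cannot run out before every element of $\mathcal B_{\le L}$ has been examined, unless the search has already stopped because an admissible candidate was found. By hypothesis there is an admissible $U^\star$ with $D^z(U^\star,\widetilde U_t^z)\le L$, so $U^\star\in\mathcal B_{\le L}$. The search therefore stops at the latest when it reaches $U^\star$, and may stop earlier at another admissible candidate. In either case it returns an admissible control within radius $L$, and the \textsc{Wait} fallback is never triggered.

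The main point to be careful about is that the enumeration must not repeat candidates, and it must actually include every member of $\mathcal B_{\le L}$ rather than some arbitrary subset of it. I would state this explicitly as the natural reading of the search procedure: the filter examines distinct joint controls satisfying $D^z\le L$ in non-decreasing order of deviation. Under that reading the remaining argument is the counting above. As a side remark, the bound is tight in the worst case, for instance when $U^\star$ is the only admissible candidate and is ordered last within $\mathcal B_L$.
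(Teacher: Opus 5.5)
Your proposal is correct and follows essentially the same counting argument as the paper. You count $\binom{n_z}{d}(m-1)^d$ candidates at each deviation level, sum over $d\le L$, and observe that a budget at least this large lets the ordered search reach every candidate in the radius-$L$ ball. You also spell out assumptions the paper leaves implicit, namely disjointness of the levels and that the enumeration lists each candidate exactly once.
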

\begin{proof}
    For a candidate differing from $\widetilde U_t^z$ in exactly $d$ robot controls, there are $\binom{n_z}{d}$ ways to select the modified robots and $(m-1)^d$ possible alternative control assignments. Therefore, the number of candidates with deviation at most $L$ is $\sum_{d=0}^{L}\binom{n_z}{d}(m-1)^d$. Hence, if $C_{\max}$ is no smaller than this quantity, every candidate within the prescribed search radius can be examined, and any admissible candidate in this region is guaranteed to be found.
\end{proof}

Lemma~\ref{lemma:safety_bound} shows that the search radius $L$ controls
the tradeoff between correction capability and computational effort.
Increasing $L$ enlarges the candidate set available to the filter, while
decreasing $L$ restricts the search to controls closer to the learned
proposal and reduces the search effort. We next characterize the one-step safety property provided by the reachable state sets of the interacting groups.

\begin{proposition}
\label{prop:one_step_safety}
Suppose for each $z'\in\mathcal C^z$ and $j\in\mathcal N_{z'}$, $F_t^{j,z'}$ matches the corresponding system transition, and $\widehat p_t^{j,z'}=p_t^{j,z'}$. If a candidate control $U$ satisfies
\eqref{eq:safety_conditions}, then no inter-group collision occurs at
time $t+1$ for any
$U_t^{z'}\in\mathcal U^{|\mathcal N_{z'}|}$.
\end{proposition}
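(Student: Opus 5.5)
The argument is a direct containment: under the hypotheses, the reachable set $\Omega_{t+1}^{j,z'}$ contains every position robot $j$ of group $z'$ can actually occupy at time $t+1$, and the last admissibility condition in \eqref{eq:safety_conditions} keeps robots of group $z$ off all of those positions. I would fix an arbitrary $z'\in\mathcal C^z$, an arbitrary joint control $U_t^{z'}\in\mathcal U^{|\mathcal N_{z'}|}$, a robot $j\in\mathcal N_{z'}$ with control $u_t^{j,z'}$, and a robot $i\in\mathcal N_z$. I will take ``inter-group collision'' to mean that robots $i$ and $j$ occupy the same position at $t+1$, and I will state this convention explicitly.

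\textbf{Step 1 (true next positions lie in the reachable set).} Since $F_t^{j,z'}$ matches the true transition and $\widehat p_t^{j,z'}=p_t^{j,z'}$, the true next position is
\[
p_{t+1}^{j,z'}=F_t^{j,z'}\bigl(p_t^{j,z'},u_t^{j,z'}\bigr)=F_t^{j,z'}\bigl(\widehat p_t^{j,z'},u_t^{j,z'}\bigr).
\]
Because $u_t^{j,z'}\in\mathcal U$, definition \eqref{eq:interaction_prediction} gives $p_{t+1}^{j,z'}\in\Omega_{t+1}^{j,z'}$. This holds for every choice of $U_t^{z'}$.

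\textbf{Step 2 (group $z$ lands where predicted).} By \eqref{eq:safety_prediction}, $F_t^z$ is the restriction of the system transition \eqref{eq:system_dynamics} to group $z$, so executing $U$ gives $p_{t+1}^{i,z}=\widehat p_{t+1}^{i,z}(U)$. This step is the main obstacle. It needs the component of $F_t$ for robot $i$ to depend only on $p_t^{i,z}$ and $u^{i,z}$, not on other groups' controls, as in the grid model where each command moves one robot one cell. I would state this decoupling explicitly, reading it as implicit in how the paper writes $F_t^z(X_t^z,U)$ and $F_t^{j,z'}(\cdot,u')$ as robot-level maps.

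\textbf{Step 3 (conclude).} The last condition of \eqref{eq:safety_conditions} gives $\|\widehat p_{t+1}^{i,z}(U)-p'\|_2>0$ for all $p'\in\Omega_{t+1}^{j,z'}$. Taking $p'=p_{t+1}^{j,z'}$, which is allowed by Step 1, and using Step 2, we get $p_{t+1}^{i,z}\neq p_{t+1}^{j,z'}$. Since $i$, $j$, $z'$ and $U_t^{z'}$ were arbitrary, no robot of group $z$ shares a position with any robot of a neighboring group at time $t+1$.

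If swap conflicts across groups also had to count as collisions, the claim would need an extra condition. That is presumably why the statement is restricted to same-cell occupancy, and I would note this in a remark.
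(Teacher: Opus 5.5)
Your proof is correct and follows the paper's argument: under the stated hypotheses the true next position $p_{t+1}^{j,z'}$ lies in $\Omega_{t+1}^{j,z'}$, and the last admissibility condition then keeps robot $i$ off it. Your Step 2 simply makes explicit the decoupled-dynamics reading that the paper uses tacitly when it treats $\widehat p_{t+1}^{i,z}(U)$ as the executed position. One correction to your closing remark: $\mathrm{Wait}\in\mathcal U$ keeps a robot in place, so $p_t^{j,z'}\in\Omega_{t+1}^{j,z'}$, and the last condition therefore already forbids robot $i$ from entering any neighboring robot's current cell, which rules out cross-group swaps with no extra condition.
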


\begin{proof}
For any $U_t^{z'}\in\mathcal U^{|\mathcal N_{z'}|}$ and each
$j\in\mathcal N_{z'}$, let $u_t^{j,z'}$ denote the corresponding
individual robot control. From \eqref{eq:interaction_prediction},
$p_{t+1}^{j,z'}=F_t^{j,z'}\big(p_t^{j,z'},u_t^{j,z'}\big)\in\Omega_{t+1}^{j,z'},$ since $\widehat p_t^{j,z'}=p_t^{j,z'}$.
Because \eqref{eq:safety_conditions} is required to hold for every
$p'\in\Omega_{t+1}^{j,z'}$, it holds in particular for the actual
position $p_{t+1}^{j,z'}$. Hence, the candidate $U$ avoids an inter-group collision at time $t+1$.
\end{proof}

\begin{remark}
Under communication loss, $\Omega_{t+1}^{j,z'}$ is constructed from
the predicted position $\widehat p_t^{j,z'}$. Therefore, the resulting safety condition depends on the accuracy of this prediction.
\end{remark}

The complete safety filtering procedure is summarized in
Algorithm~\ref{alg:safety_filter}.

\begin{algorithm}[t]
\caption{Predictive Safety Filter}
\label{alg:safety_filter}
\begin{algorithmic}[1]
\STATE Receive $\widetilde U_t^z$, $X_t^z$, and $\widehat S_t^z$.
\STATE Construct $\{\Omega_{t+1}^{j,z'}\}_{j\in\mathcal N_{z'},\,z'\in\mathcal C^z}$ using \eqref{eq:interaction_prediction}.
\STATE Generate candidate controls in nondecreasing order of $D^z(U,\widetilde U_t^z)$, subject to $L$ and $C_{\max}$.
\FOR{each generated candidate $U$} 
    \STATE Predict $\widehat X_{t+1}^z(U)$ using \eqref{eq:safety_prediction}.
    \STATE Check the admissibility conditions in
    \eqref{eq:safety_conditions}.
    \IF{the candidate is admissible}
        \STATE Set $U_t^z=U$ and terminate the search.
    \ENDIF
\ENDFOR
\IF{no examined candidate is admissible}
\STATE Set $U_t^z=(\textsc{Wait}:i\in\mathcal N_z)$.
\ENDIF
\RETURN $U_t^z$.
\end{algorithmic}
\end{algorithm}

\subsection{Centralized Training and Distributed Group Execution}
The proposed framework follows the centralized training with distributed execution paradigm \cite{lowe2017multi,foerster2018counterfactual}. During training, information from all groups is available to a centralized critic, while trajectories are generated by the hierarchical policies together with the safety filter. During execution, each group coordinator independently computes a joint control for its assigned robots using its group state and available inter-group information, without a global coordinator.

For the resulting trajectory, the return from time $t$ is defined as
\begin{equation}\label{eq:training_return}
G_t = \sum\nolimits_{k=t}^{T-1}R_k. 
\end{equation} 
Using the centralized value critic $V_\omega$, the team advantage is
estimated as
\begin{equation}\label{eq:advantage} 
\widehat A_t = G_t - V_\omega(X_t). 
\end{equation}
The same team advantage is then used to update both levels of the hierarchical policy, while the centralized critic is trained using the realized return:
\begin{align}\label{eq:losses}
\mathcal L_{\mathrm{r}}(\psi_z)
&=
-\mathbb E\left[
\widehat A_t
\log\mu_{\psi_z}
\big(r_t^z\mid o_{t}^z\big)
\right],
\notag \\
\mathcal L_{\mathrm{l}}(\theta_z)
&=
-\mathbb E\left[
\widehat A_t
\log\pi_{\theta_z}
\big(\widetilde U_t^z\mid w_t^z\big)
\right],
\notag \\
\mathcal L_V(\omega)
&=
\mathbb E\left[
\left(V_\omega(X_t)-G_t\right)^2
\right].
\end{align}
where $\mathcal L_{\mathrm{r}}(\psi_z)$ and
$\mathcal L_{\mathrm{l}}(\theta_z)$ denote the reference and local
actor losses for group $z$, respectively, and
$\mathcal L_V(\omega)$ denotes the centralized critic loss. The parameters $\{\psi_z,\theta_z,\eta_z\}_{z=1}^{Z}$ and $\omega$
are optimized using their corresponding losses in
\eqref{eq:losses} and
$\mathcal L_{\mathrm{c}}$, respectively.
Notice that the GRU predictors are trained separately from the policy gradient updates using the supervised loss $\mathcal L_c$. During centralized training, the true messages $M_t^{z',z}$ remain available as supervision regardless of communication loss.


Thus, the safety filter is treated as part of the closed-loop transition. Its corrections affect the resulting trajectory and team return, while the policy gradient is evaluated on the sampled policy outputs.

After training, the centralized critic is discarded, and the learned
policy and predictor parameters are fixed. Each group then executes
independently using its recurrent predictor, reference policy, local
control policy, and safety filter, without access to the global state.

\subsection{Algorithm Summary}
The hierarchical MARL coordination algorithm is summarized in Algorithm~\ref{alg:hierarchical_marl}.

\begin{algorithm}[h]
\caption{Hierarchical Warehouse Robot Coordination}
\label{alg:hierarchical_marl}
\begin{algorithmic}[1]
\STATE Initialize fixed robot groups $\{\mathcal{N}_z\}_{z=1}^{Z}$, reference policies $\{\mu_{\psi_z}\}_{z=1}^{Z}$, and local control policies $\{\pi_{\theta_z}\}_{z=1}^{Z}$.
\STATE Initialize recurrent interaction state predictors $\{f_{\eta_z}\}_{z=1}^{Z}$. \STATE Initialize centralized critic $V_{\omega}$.

\FOR{episode $=1,\ldots,M$}
\STATE Reset the environment and recurrent predictor states.
\FOR{$t=0,\ldots,T-1$}
\FOR{each group $z\in\mathcal Z$}
\STATE Observe $X_t^z$, receive $S_t^z$ and
$\{A_t^{z',z}\}_{z'\in\mathcal C^z}$.
\STATE Update GRU and construct $\widehat S_t^z$ using  \eqref{eq:recurrent_predictor} -- \eqref{eq:interaction_state_collection}.
\STATE Sample
$r_t^z\sim\mu_{\psi_z}(\cdot\mid o_{t}^z)$.
\STATE Sample
$\widetilde U_t^z\sim\pi_{\theta_z}(\cdot\mid w_t^z)$.
\STATE Apply Algorithm~\ref{alg:safety_filter} to obtain $U_t^z$.
\ENDFOR
\STATE Execute $U_t=(U_t^z:z\in\mathcal Z)$, observe
$X_{t+1}$ and $R_t$, and store the rollout data.
\ENDFOR
\STATE Update centralized critic $V_{\omega}$.
\STATE Update $\{\psi_z,\theta_z\}_{z=1}^{Z}$ using the common advantage.
\STATE Update $\{\eta_z\}_{z=1}^{Z}$ using $\mathcal L_{\mathrm c}$.
\ENDFOR
\end{algorithmic}
\end{algorithm}

\section{Experiments}
We evaluate the proposed framework in a warehouse navigation environment with fixed shelves, walls, open aisles, and intersecting robot flows. The simulation experiments are designed to examine the performance and robustness of the proposed solution.

\subsection{Simulation Setup}
The simulations are conducted in warehouse layouts consisting of fixed shelf obstacles, traversable aisles, and shared crossing regions. For each configuration, multiple scenarios are generated by randomly sampling feasible initial and goal positions while keeping the warehouse layout fixed. These assignments create diverse traffic flows involving interactions among robots within the same control group and across different groups. Group assignments remain fixed throughout each episode and are independent of the robots' physical locations. 
For each configuration, a separate set of hierarchical policy parameters is trained. Communication loss is introduced during training and evaluation according to the specified communication loss probability $p$. The learned policies are then evaluated with fixed parameters on independently generated scenarios under the corresponding configuration. Each episode terminates when all robots reach their assigned goals or when the prescribed horizon is reached.

We consider robot teams of $20$, $40$, $60$, $80$, $90$, and $100$ robots. The robots are partitioned into $Z=4$ fixed groups. Communication loss probabilities of $p\in\{0.2,0.4,0.6\}$ are considered. Each training run consists of $2000$ episodes, with a maximum horizon of $500$ time steps per episode. For the safety filter, we set $L=2$ and $C_{\max}=500$.
Performance is evaluated using the fraction of robots reaching their assigned goals, accumulated team return, collision rate, communication payload, and safety correction rate. The success rate is defined as the fraction of robots that reach their assigned goals within the episode horizon. The safety correction rate is defined as the fraction of individual robot controls for which the executed control differs from the control proposed by the local policy. Each trained policy is evaluated over $100$ randomly generated episodes. 

For comparison, we consider two baseline methods. The first is a fully decentralized MARL baseline, in which each robot independently selects its control using robot-level observations and communicates with all others without the proposed group-level hierarchy. The second is IC3Net~\cite{singh2018learning}, implemented in the same warehouse environment. Both baselines are trained separately under matched reward, task assignment, robot team, and communication loss settings.

\subsection{Simulation Results}
We assess the proposed framework from multiple perspectives, including training convergence, task completion, communication load, and safety performance.
We first examine the training convergence of the proposed method. Fig.~\ref{fig:fig1}.(a) compares the proposed method with the fully decentralized baseline for teams of $20,40$, and $60$ robots. For $20$ and $40$ robots, both methods can improve during training, while the proposed method reaches a higher final episode reward. However, when the robot team is larger, i.e., $60$ in this comparison case, the fully decentralized baseline remains at a considerably lower reward, while the proposed method continues to improve and converges to a higher reward. Fig.~\ref{fig:fig1}.(b) further reports the training results of the proposed method for larger teams of  $80$, $90$, and $100$ robots. The episode reward increases consistently and stabilizes for all configurations, and converges at around $1600$ episodes.
\begin{figure}
    \centering
    \includegraphics[width=\linewidth]{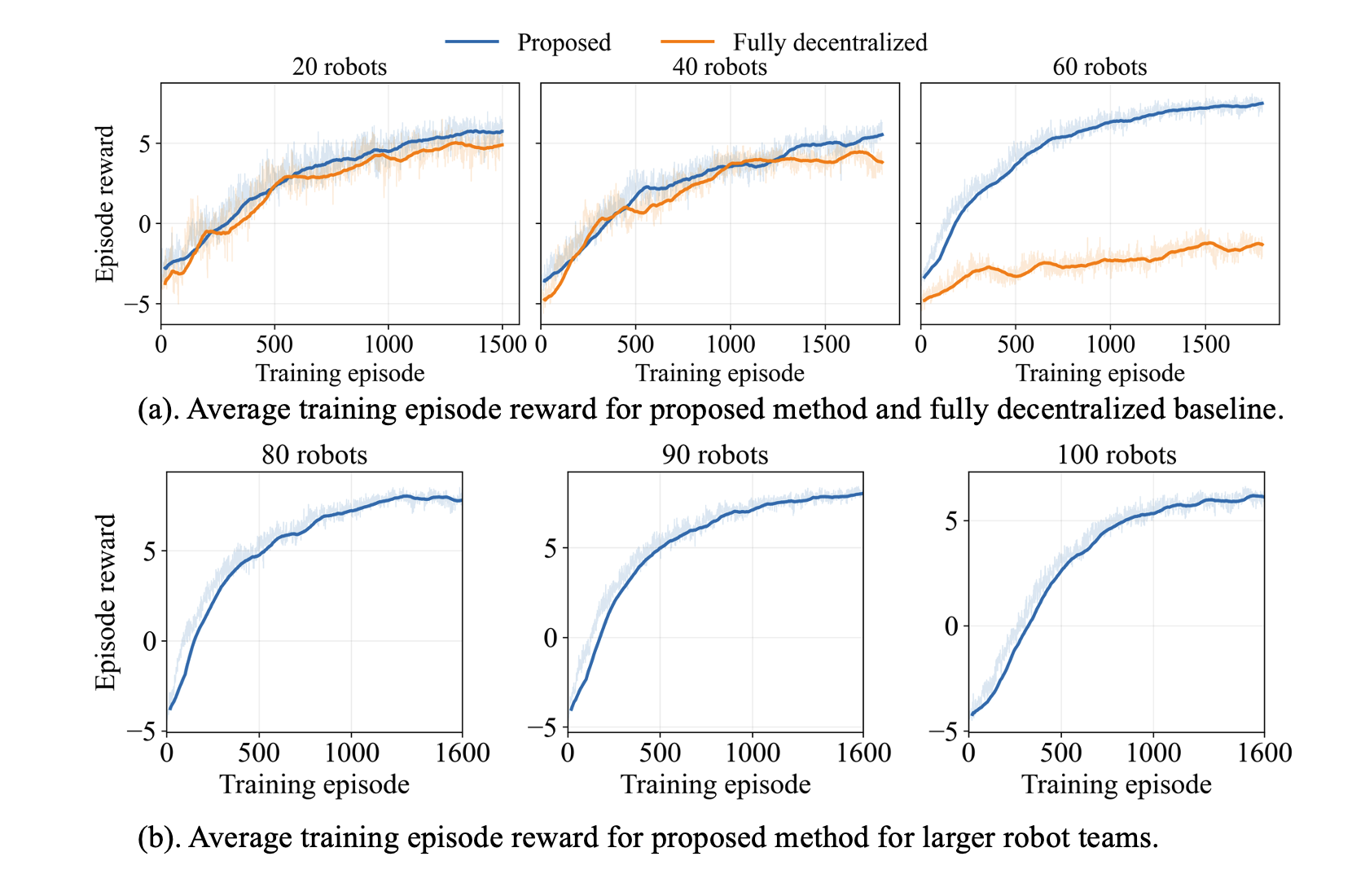}
    \vspace{-1cm}
    \caption{Training convergence of the proposed method and the comparison baseline under different robot team sizes, with a 0.2 communication loss rate.}
    \label{fig:fig1}
\end{figure}

We then evaluate the task completion rate under communication loss. We compare the proposed method with IC3Net for team sizes of $40$ and $60$ robots and communication loss probabilities $p\in\{0.2,0.4,0.6\}$. Fig.~\ref{fig:fig2} shows the comparison results. To visualize this, for each configuration, we sort the evaluation episodes by achieved success rate, and the horizontal axis represents the corresponding episode percentile. The proposed method achieves higher success rates across the entire range of evaluation episodes for all tested cases. This difference may be attributed to the dense robot interactions in shared aisles and intersections, where the hierarchical method with a recurrent predictor provides additional structure for handling communication loss.

\begin{figure}
    \vspace{-0.3cm}
    \centering
    \includegraphics[width=\linewidth]{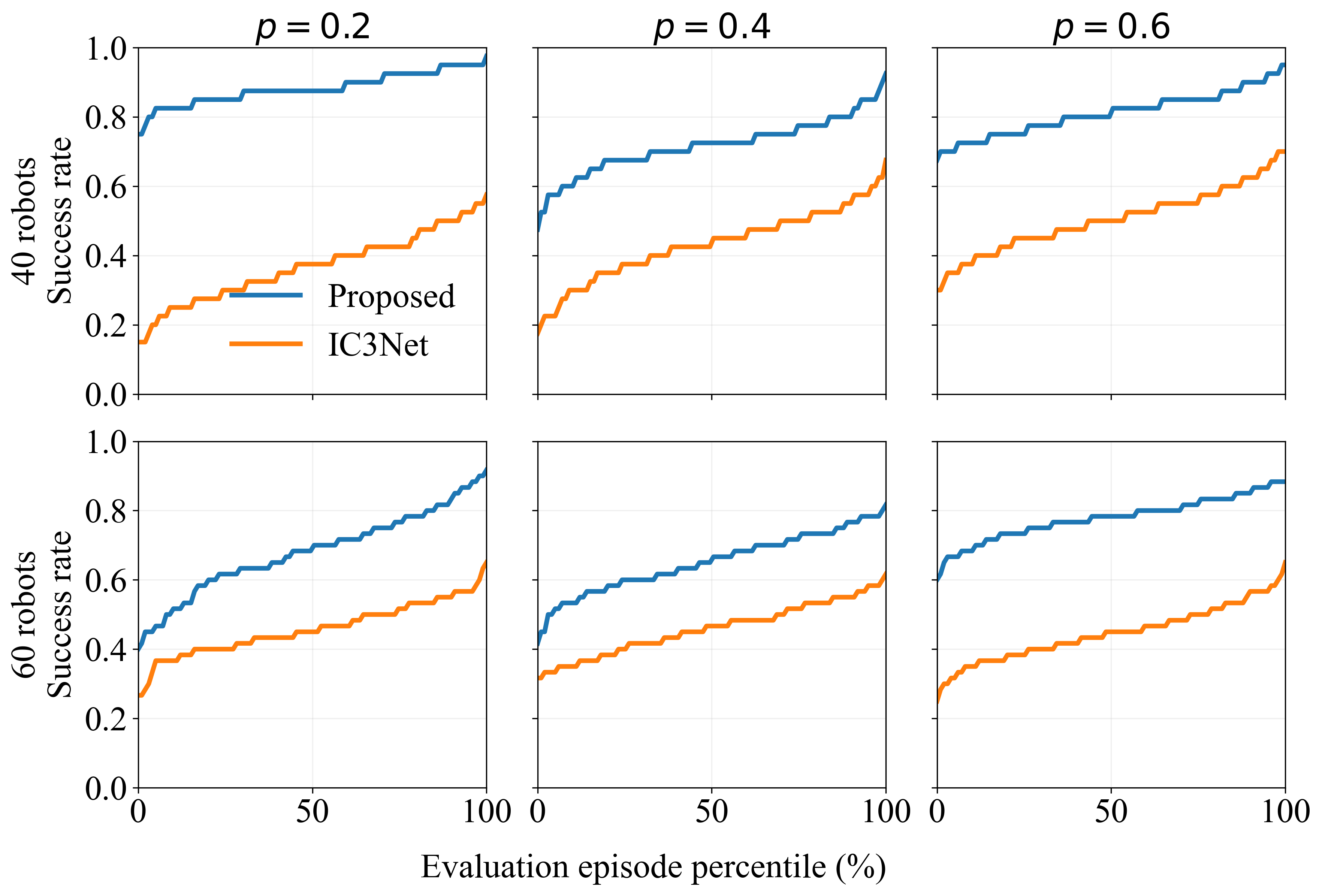}
    \vspace{-0.8cm}
    \caption{Success rate comparison between the proposed method and IC3Net for teams of $40$ and $60$ robots under communication loss probabilities $p\in\{0.2,0.4,0.6\}$.}
    \label{fig:fig2}
    \vspace{-0.7cm}
\end{figure}

Moreover, we examine the communication requirement of the proposed hierarchical architecture. Fig.~\ref{fig:fig3}. (a) compares the communication payload per time step with the fully decentralized all-to-all communication structure. The communication payload is measured as the total number of scalar state and control elements transmitted across communication links in one time step, indicating the required communication bandwidth. As the number of robots increases, the fully decentralized communication load grows rapidly, while the communication load of the proposed method increases approximately linearly.
Fig.~\ref{fig:fig3} compares the measured communication ratio with the analytical result we provide in Section~III.A. The numerical results therefore confirm the communication-scaling analysis: the proposed architecture requires $\mathcal{O}(NZ)$ communication compared to $\mathcal{O}(N^2)$ for robot-level all-to-all communication.

\begin{figure}
    \centering
    \includegraphics[width=\linewidth]{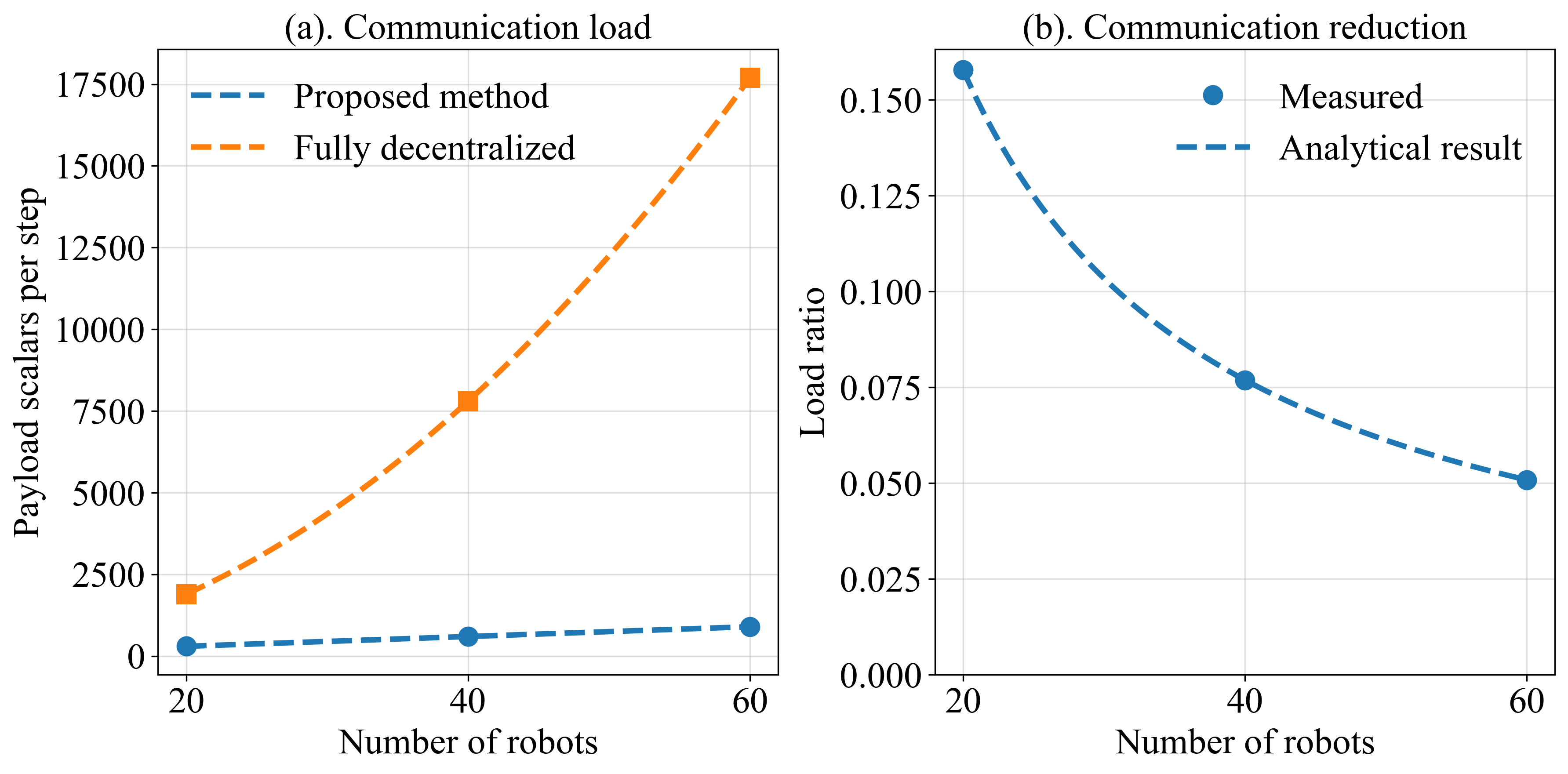}
    \vspace{-0.5cm}
    \caption{(a). Communication load per time step for the proposed method and the fully decentralized baseline. (b). Ratio between payloads of the two methods and the analytical result. }
    \label{fig:fig3}
\end{figure}

In addition, we evaluate safety-related metrics for robot team sizes $40,60,80,90$, and $100$ under a communication loss probability of $p=0.2$. Table~\ref{tab:safety} summarizes the results. No collisions are observed in the evaluation episodes, while the filter modifies fewer than $8\%$ of the proposed robot controls in all tested scenarios. The mean number of searches generally increases for larger teams, indicating that an admissible control can be identified within the normal safety search. These results show that the safety filter prevents observed collisions with limited modification of the learned policy outputs and bounded candidate search.

\begin{table}[t]
\centering
\caption{Safety filter performance under $p=0.2$.}
\label{tab:safety}
\small
\setlength{\tabcolsep}{4pt}
\renewcommand{\arraystretch}{1.0}
\begin{tabular}{c|ccc}
\hline
\textbf{$N$} &
\textbf{Collision rate} &
\textbf{Correction rate} &
\textbf{Mean candidates} \\
\hline
40  & 0.00\% & 3.45\% & 3.79  \\
60  & 0.00\% & 5.48\% & 16.44 \\
80  & 0.00\% & 7.90\% & 24.15 \\
90  & 0.00\% & 6.54\% & 35.63 \\
100 & 0.00\% & 7.34\% & 24.68 \\
\hline
\end{tabular}
\end{table}

\section{Concluding Remarks}
In this paper, we developed a hierarchical MARL framework for warehouse robot coordination under communication loss. We combined recurrent prediction of missing inter-group information, group-level reference and local control policies, and a predictive safety filter for robot team coordination. We analyzed information compression, communication scaling, and safety candidate search. Simulation results demonstrated the method's effectiveness under different configurations. Future work will include considering adaptive and dynamic group assignment, more sophisticated reference space design, and validation in larger and more realistic warehouse environments.

\bibliographystyle{IEEEtran}
\bibliography{reference,ids}

\end{document}